\documentclass[letterpaper,10pt,conference]{ieeeconf}

\IEEEoverridecommandlockouts
\usepackage{amsmath,amssymb,amsfonts}
\usepackage{booktabs}
\usepackage{cite}
\usepackage{graphicx}
\usepackage{xcolor}
\usepackage{algorithm}
\usepackage{algpseudocode}
\usepackage{mathtools}
\makeatletter
\let\NAT@parse\undefined
\makeatother
\usepackage[hidelinks]{hyperref}
\usepackage{caption}
\usepackage{subcaption}

\newtheorem{theorem}{Theorem}
\newtheorem{remark}{Remark}
\newtheorem{assumption}{Assumption}

\newcommand{\R}{\mathbb{R}}

\title{Accelerating Branch MPC with Two-Level Parallel Direct Solves on GPUs}

\author{Fenglong Song$^{1}$, Luyao Zhang$^2$, Liang Wu$^{3}$, J\'an Drgo\v na$^3$, and Colin N. Jones$^{1}$%
\thanks{This work is supported by NCCR Automation, a National Centre of Competence in Research, funded by the Swiss National Science Foundation (grant number 51NF40\_225155). This work is also partially supported by the U.S. DOE, Office of Science, ASCR program under the Scientific Discovery through Advanced Computing (SciDAC) Institute “LEADS: LEarning-Accelerated Domain Science” and by the Ralph O’Connor Sustainable Energy Institute (ROSEI) at Johns Hopkins University.}%
\thanks{$^{1}$Automatic Control Laboratory, \'{E}cole polytechnique f\'{e}d\'{e}rale de Lausanne (EPFL).
{\tt\small \{fenglong.song\}@epfl.ch}}%
\thanks{$^{2}$Delft Center for Systems and Control, Delft University of Technology.
}%
\thanks{$^{3}$Department of Civil and Systems
Engineering, Johns Hopkins University.
}%
}

\begin{document}

\maketitle
\thispagestyle{empty}
\pagestyle{empty}

\begin{abstract}
Branch model predictive control optimizes multiple future trajectories coupled through shared decisions, with computational demands increasing as the number of scenarios and prediction horizon grow. We present a GPU-accelerated direct linear solver for branch MPC formulations in which all trajectories share a single root decision node and evolve independently thereafter. By operating at the linear-algebra level, the solver provides a reusable backend for multiple optimization algorithms whose reduced systems have the required symmetric positive-definite structure. The solver exploits two levels of parallelism: across scenarios and along each prediction horizon. A tailored variable ordering enables horizon-parallel Cholesky factorization while preserving a single root–tail coupling block per scenario in the factor. Numerical experiments demonstrate substantial speedups over state-of-the-art sparse direct solvers, achieving factorization speedups of up to 6.0$\times$ over \texttt{cuDSS} and 27.6$\times$ over eight-thread \texttt{PARDISO}, with triangular solve speedups of up to 3.5$\times$ and 15.8$\times$, respectively.

% Branch model predictive control represents alternative future evolutions as trajectory branches coupled through partially shared decisions. However, its computational cost grows rapidly with the horizon length and number of scenarios, making real-time execution challenging. GPUs provide the parallel computing resources needed to alleviate this bottleneck. While previous work has exploited parallelism across scenarios and prediction stages at the algorithmic level, we address the problem directly at the linear algebra level, producing a backend that can be readily integrated into various solver frameworks. We develop a GPU direct solver that parallelizes both across scenarios and along the prediction horizon, achieving $\mathcal{O}(\log N)$ parallel time given sufficient computing resources, where $N$ is the horizon length. Numerical experiments demonstrate substantial speedups over state-of-the-art sparse direct solvers.

\end{abstract}

%%%%%%%%%%%%%%%%%%%%%%%%%%%%%%%%%%%%%%%%%%%%%%%%%%%%%%%%%%%%%%%%%%%%%%%%%%%%%%%%
% ============================== SECTION: Introduction ==============================
%%%%%%%%%%%%%%%%%%%%%%%%%%%%%%%%%%%%%%%%%%%%%%%%%%%%%%%%%%%%%%%%%%%%%%%%%%%%%%%%
\section{Introduction}
\label{sec:introduction}

% \subsection{Branch Model Predictive Control}
Model predictive control (MPC) selects control actions using predictions of future system behavior. Under uncertainty, however, a single predicted trajectory may not represent the range of disturbances or operating modes that the system may encounter. Branch MPC addresses this limitation by optimizing multiple candidate future trajectories whose decisions can diverge as additional information becomes
available~\cite{oliveira2023interfaction,chen2022interactive}, preparing different responses to possible future evolutions. However, its computational cost grows with both the number of trajectories and the prediction horizon.

% This paper considers branch MPC primarily in the setting of scenario-based MPC~\cite{bernardini2009scenario,calafiore2013scenario}, where sampled realizations of uncertain parameters or disturbances generate a collection of scenario trajectories. Representing complex uncertainty distributions and low-probability events can require hundreds or even thousands of scenarios~\cite{degroot2025scenario, sampathirao2018gpu}, and these scenarios are usually highly regular, i.e., have uniform horizon length and variable sizes. Such large and relatively regular collections of trajectories provide substantial parallelism and great potential for Graphics Processing Units (GPUs) to take advantage of.
% Contingency MPC~\cite{zheng2026contingencyreview, alasterda2019contigency, alsterda2021contingency} is another setting in which branching trajectories are used to represent alternative future events, but typically involves fewer and less regular branches and is not our primary target.

This paper proposes a computational approach that applies directly to the broad class of branch MPC problems that appears under different names throughout the MPC literature, such as min-max feedback MPC over disturbance trees~\cite{scokaert1998minmax}, scenario-tree stochastic
MPC~\cite{bernardini2009scenario,mesbah2016stochastic}, multi-stage robust MPC~\cite{lucia2013multistage}, risk-averse MPC in which the expectation over the tree is replaced by a coherent risk measure~\cite{sopasakis2019riskaverse}, contingency MPC for automated
driving~\cite{alasterda2019contigency,alsterda2021contingency,zheng2026contingencyreview}, and branch MPC for interactive, multi-modal motion planning~\cite{chen2022interactive}. These formulations differ in how the tree is generated, but they all produce optimization problems, and hence linear systems, with the same tree-induced sparsity.

Specifically, we consider a formulation in which all trajectories share one root decision node and evolve independently thereafter 
% (called \textit{scenario fan} in some literature~\cite{Sampathirao2015distributed, Patrinos2011stochastic})
, as illustrated in Fig.~\ref{fig:branch_mpc}.
This structure exposes two complementary sources of parallelism: different scenario tails can be processed concurrently, while operations along the horizon can also be organized into parallel levels.
Such large and relatively regular collections of trajectories provide substantial parallelism and great potential for Graphics Processing Units (GPUs) to take advantage of.

\begin{figure}[!t]
\centering
\includegraphics[width=\columnwidth]{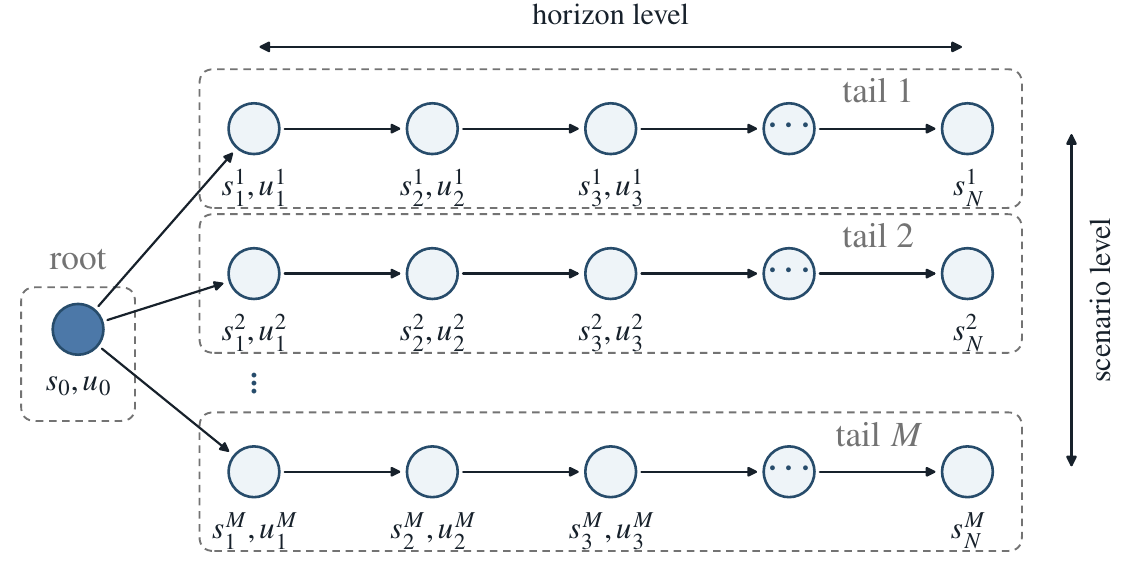}
\caption{Illustration for the branch MPC we consider in this work, where $s$ stands for states and $u$ stands for inputs. All $M$ scenario tails share one complete root decision node $(s_0,u_0)$ and evolve independently
afterwards.  
% The two annotated directions are the parallelism levels
% exploited by the proposed solver.
}
\label{fig:branch_mpc}
\end{figure}

% \subsection{Related Work and Contributions}
The sparsity structure of the scenario tree has been exploited in related work. The Riccati recursion can be used to process independent subtrees toward the root, requiring work linear in the number of nodes~\cite{frison2017tree,frison2020hpipm}. Dual-Newton methods provide a complementary decomposition by operating directly on tree-structured quadratic programs and permit parallel branch processing without a scenario-wise reformulation~\cite{kouzoupis2019treeqp}. 
% \FS{should mention~\cite{sampathirao2018gpu, Sampathirao2015distributed}} 
Although these approaches expose concurrency across branches, their temporal recursions remain sequential along each branch, limiting the available parallelism for long prediction horizons.

This limitation motivates the use of temporal parallelization methods developed for optimal control. Associative scans enable parallel linear-quadratic control along the prediction
horizon~\cite{sarkka2023temporal,amatuccisousa26ral}. With sufficient computational resources, it can reduce the time complexity from $\mathcal{O}(N)$ to $\mathcal{O}(\log N)$, where $N$ is the horizon length. Related structure-exploiting factorization methods organize the computation into a hierarchy of parallel subproblems, likewise achieving logarithmic time complexity~\cite{nielsen2015parallel,pas2026cyqlone,schwan2026socu,song2025parallelkktsolverpiqp}. These techniques provide building blocks for parallelizing the computations within each trajectory branch.

Recent work exploits both directions of parallelism through solver-specific recursions.
The method in~\cite{zhang2026parallelbranch} embeds associative scans within an augmented Lagrangian framework, while \cite{sousapinto2026rakecompress} applies rake--compress Riccati recursion to general scenario trees.  
Our approach targets a different computation layer. We exploit the two-level parallelism directly \emph{at the linear-algebra level}, performing parallel Cholesky factorization and triangular solves when solving the linear system arising from the optimization algorithms. It can therefore serve as a backend for multiple optimization methods, such as interior-point methods (IPMs) and the
alternating direction method of multipliers (ADMM), whenever they produce linear systems with the same target structure (see Appendix~\ref{app:equivalence_lin_sys_sparsity}).

% \subsection{Contributions}

Our contributions are:
\begin{itemize}
  % \item We show that immediate-branching branch MPC and delayed-branching contingency MPC after shared-prefix condensation produce a common root-coupled block-tridiagonal linear-system structure.
  % \item We derive an exact direct factorization and solve that combines concurrent processing of trajectory branches with parallelism along each horizon.
  % \item We implement a latency-oriented backend for numerical factorization and repeated single-right-hand-side solves, and define an evaluation that isolates both levels of parallelism on Branch-MPC-derived and controlled synthetic systems.
  \item We develop a direct factorization and triangular solve procedure for root-coupled branch MPC systems that combines parallelism across scenarios and along each prediction horizon. 
  % A tailored variable ordering preserves a single root–tail coupling block per scenario in the Cholesky factor, enabling compact root updates and coupling operations.
  \item We implement\footnote{Link to code: \href{https://github.com/PREDICT-EPFL/branch-mpc-linsys-solver}{github.com/PREDICT-EPFL/branch-mpc-linsys-solver}} the method as a GPU linear-algebra backend, which can be integrated into multiple optimization algorithms whose reduced systems have the required symmetric positive-definite structure.
  \item We evaluate our implementation through numerical experiments across multiple scenario counts and prediction horizons, showing substantial speedups compared with state-of-the-art general sparse direct solvers \texttt{PARDISO}~\cite{pardiso} and \texttt{cuDSS}~\cite{cudss}. 
\end{itemize}

% \begin{figure*}[t]
%     \centering
%      \begin{subfigure}[b]{\columnwidth}
%          \centering
%          \includegraphics[width=\textwidth]{figures/ordering_forward_sequential.pdf}
%          \caption{}
%          \label{fig:ordering_forward_seqential}
%      \end{subfigure}
%      \hfill
%      \begin{subfigure}[b]{\columnwidth}
%          \centering
%          \includegraphics[width=\textwidth]{figures/ordering_reverse_sequential.pdf}
%          \caption{}
%          \label{fig:ordering_reverse_seqential}
%      \end{subfigure}
%      \hfill
%      \begin{subfigure}[b]{\columnwidth}
%          \centering
%          \includegraphics[width=\textwidth]{figures/ordering_reverse_parallel_nosep.pdf}
%          \caption{}
%          \label{fig:ordering_reverse_parallel_nosep}
%      \end{subfigure}
%      \hfill
%      \begin{subfigure}[b]{\columnwidth}
%          \centering
%          \includegraphics[width=\textwidth]{figures/ordering_reverse_parallel_sep.pdf}
%          \caption{}
%          \label{fig:ordering_reverse_parallel_sep}
%      \end{subfigure}
%      \caption{Sparsity of the KKT matrix and its Cholesky factor under different variable orderings. We choose $M=2$ scenarios and $N=9$ for a brief demonstration. All block sizes are assumed to be 3. \FS{the legend is missing}}
%      \label{fig:ordering}
% \end{figure*}

\emph{Notation}
We denote the set of real numbers by \(\mathbb{R}\), and use
\(\mathbb{R}^{n}\) and \(\mathbb{R}^{n\times n}\) for the spaces of
real \(n\)-vectors and real \(n\times n\) matrices, respectively. The
cone of real symmetric positive-definite (SPD)
matrices is denoted by \(\mathbb{S}_{++}^{n}\).
We use \(\mathbb{I}_{m}^{n}\coloneqq\{m,m+1,\ldots,n\}\) for integer index sets. Throughout the paper, \(M\) and \(N\) are reserved for the number of scenarios and their common horizon length, respectively. Accordingly, \(i\in\mathbb{I}_{1}^{M}\) always indexes a scenario, whereas \(k\in\mathbb{I}_{0}^{N}\) always indexes a temporal stage. A scenario superscript, as in \(x_k^i\), is an index rather than an exponent. All vectors are column vectors, and parentheses denote vertical concatenation, e.g., \((x_1,\ldots,x_N)\coloneqq[x_1^\top,\ldots,x_N^\top]^\top\). We use \(I_n\) for the \(n\times n\) identity matrix and \(\operatorname{blkdiag}(\cdot)\) for block-diagonal concatenation.

%%%%%%%%%%%%%%%%%%%%%%%%%%%%%%%%%%%%%%%%%%%%%%%%%%%%%%%%%%%%%%%%%%%%%%%%%%%%%%%%
% ============================== SECTION: Preliminaries ==============================
%%%%%%%%%%%%%%%%%%%%%%%%%%%%%%%%%%%%%%%%%%%%%%%%%%%%%%%%%%%%%%%%%%%%%%%%%%%%%%%%
\section{Preliminaries}

% \subsection{Parallel Direct Solution of Block-Tridiagonal Systems}
\label{sec:parallel_block_tridiagonal}

We briefly introduce the parallel Cholesky factorization and triangular solve for block tridiagonal systems. 
Consider an SPD block-tridiagonal matrix $\Psi$ with $N$ diagonal blocks:
\begin{equation*}
    \underbrace{
    \begin{bmatrix}
    D_{1} & E_{1}^{\top} &                  & \\[-0.2em]
    E_{1} & D_{2}        & \ddots           & \\[-0.2em]
            & \ddots         & \ddots           & E_{N-1}^{\top} \\[0.2em]
            &                & E_{N-1}        & D_{N}
    \end{bmatrix}
    }_{\Psi}
    \underbrace{
    \begin{bmatrix} \xi_1 \\ \xi_2 \\ \vdots \\ \xi_N \end{bmatrix}
    }_{\xi}
    =
    \underbrace{
    \begin{bmatrix} r_1 \\ r_2 \\ \vdots \\ r_N \end{bmatrix}
    }_{r}
\end{equation*}
For any block-wise permutation matrix \(\Gamma\), define
\begin{equation*}
    \widehat \xi \coloneqq \Gamma \xi,\quad
    \widehat r \coloneqq \Gamma r,\quad
    \widehat\Psi \coloneqq \Gamma\Psi\Gamma^\top.
\end{equation*}
The original system is equivalent to \(\widehat\Psi\widehat \xi=\widehat r\) since \(\Gamma^\top\Gamma=I\).
The vector \(\widehat \xi\) contains the same variable blocks in the sequence specified by the rows of \(\Gamma\). Thus, symmetrically permuting \(\Psi\) while applying the same permutation to \(\xi\) and \(r\) is equivalent to reordering the block variables.

A block-wise recursive permutation is proposed by~\cite{schwan2026socu} to expose parallelism in factorization and triangular solves. We denote its \(N\)-block instance by \(\Gamma_N\). At each recursive call, the permutation takes every other block from the current ordered sequence, starting with the first, while preserving their relative order. It then treats the unselected blocks as a new ordered sequence and repeats the same operation:
\begin{equation*}
\Gamma_N(\xi_1,\ldots,\xi_N) = 
\bigl(\xi_1,\xi_3,\xi_5,\ldots,
\Gamma_{\lfloor N/2\rfloor}(\xi_2,\xi_4,\xi_6,\ldots)
\bigr).
\end{equation*}

Take $N=7$ as an example, we have
\[
\Gamma_7 (\xi_1,\ldots,\xi_7)
=
(\xi_1,\xi_3,\xi_5,\xi_7\,|\,\xi_2,\xi_6\,|\,\xi_4),
\]
where the vertical bars show the groups produced by the recursive
calls.  Applying \(\Gamma_7\) symmetrically to \(\Psi\) gives

\begin{equation}
\setlength{\arraycolsep}{3pt}
\Gamma_7 \Psi \Gamma_7^\top =
\left[
\begin{array}{cccc|cc|c}
D_1 &     &     &     & E_1^\top &            &            \\
    & D_3 &     &     & E_2      &            & E_3^\top   \\
    &     & D_5 &     &          & E_5^\top   & E_4        \\
    &     &     & D_7 &          & E_6        &            \\
\hline
E_1 & E_2^\top & &    & D_2      &            &            \\
    &     & E_5 & E_6^\top &    & D_6        &            \\
\hline
    & E_3 & E_4^\top & &        &            & D_4
\end{array}
\right].
\label{eq:seven_block_permutation}
\end{equation}

The groups produced by this recursion induce multiple levels of parallelism used by the factorization and triangular solves.  In this example, the blocks \(1,3,5,7\) are mutually uncoupled and can be processed
concurrently.  After their contributions have been applied, blocks \(2\) and \(6\) can be processed concurrently, followed by block \(4\).  Because each recursive step reduces the number of remaining
blocks by approximately one half, the permutation enables \(O(\log N)\) time complexity with sufficient parallel resources.
We refer interested readers to~\cite{schwan2026socu} for more details.

%%%%%%%%%%%%%%%%%%%%%%%%%%%%%%%%%%%%%%%%%%%%%%%%%%%%%%%%%%%%%%%%%%%%%%%%%%%%%%%%
% ============================== SECTION: Root-Coupled Block-Tridiagonal Systems ==============================
%%%%%%%%%%%%%%%%%%%%%%%%%%%%%%%%%%%%%%%%%%%%%%%%%%%%%%%%%%%%%%%%%%%%%%%%%%%%%%%%
\section{Problem Structure}
\label{sec:system_structure}

\subsection{Problem Formulation}
We consider the problem formulation
\begin{equation}
\label{eq:scenario_problem}
\begin{aligned}
\min_{x_0,\{x^i_{1:N}\}_{i=1}^M} \quad
& \sum_{i=1}^M J_i(x_0,x^i_{1:N})  \\
\text{s.t.} \qquad ~~
& f_k^i(x_k^i,x_{k+1}^i)=0,
&& \forall k\in\mathbb{I}_1^{N-1}, ~i\in\mathbb{I}_1^M, \\
& h_k^i(x_k^i,x_{k+1}^i)\leq 0,
&& \forall k\in\mathbb{I}_1^{N-1}, ~i\in\mathbb{I}_1^M, \\
& f_0^i(x_0, x_1^i)=0,
&& \forall i\in\mathbb{I}_1^M, \\
& h_0^i(x_0, x_1^i)\leq 0,
&& \forall i\in\mathbb{I}_1^M, \\
& f_N^i(x_N^i)=0,
&& \forall i\in\mathbb{I}_1^M, \\
& h_N^i(x_N^i)\leq 0,
&& \forall i\in\mathbb{I}_1^M,
\end{aligned}
\end{equation}

where \(x_0\) is the shared root block\footnote{The initial condition constraint that arises in a scenario-based optimal control problem can be described by $f_0^i(x_0, x_1^i)$.}, while \(x^i_{1:N} \coloneqq (x^i_1, \ldots, x^i_N)\) denotes the scenario-dependent tail variables as illustrated in Fig.~\ref{fig:branch_mpc}.
The cost for the $i$-th scenario is
\begin{equation*}
J_i(x_0,x^i_{1:N})
\coloneqq
\ell_0^i(x_0, x_1^i) +
\sum_{k=1}^{N-1} \ell_k^i(x_k^i,x_{k+1}^i) + \ell_N^i(x_N^i).
\end{equation*}

% \begin{equation}
%     \label{eq:scenario_problem}
%     \begin{aligned}
%         \min_{x^{1:M}_{0:N}} \quad &  \sum_{i=1}^M J_i(x^i_{0:N})  \\
%         \text{s.t.} \quad & f_i(x_k^i, x_{k+1}^i) = 0, & \forall k\in\mathbb{I}_0^{N-1}, i\in\mathbb{I}_1^M, \\
%         & h_i(x_k^i, x_{k+1}^i) \leq 0, & \forall k\in\mathbb{I}_0^{N-1}, i\in\mathbb{I}_1^M,  \\
%         & f_N(x_N^i) = 0, & \forall i\in\mathbb{I}_1^M, \\
%         & h_N(x_N^i) \leq 0, & \forall i\in\mathbb{I}_1^M, \\
%         & x_0^i = \bar x_0, & \forall i\in\mathbb{I}_1^M,
%     \end{aligned}
% \end{equation}
% where the cost for the $i$-th scenario is the sum of stage-wise costs $\ell_k^i$ for all stages:
% \begin{equation}
%     J_i(x^i_{0:N}) \coloneqq \sum_{k=0}^{N-1} \ell_k^i(x_k^i, x_{k+1}^i) + \ell_N^i(x_N^i),
% \end{equation}

Notice that $x_k^i$ contains both states and inputs of the $i$-th scenario at the $k$-th stage\footnote{Except for the last stage, where the inputs do not appear.} for a compact formulation. 
% The formulation~\eqref{eq:scenario_problem} is a general description of the problems arising in scenario-based MPC~\FS{cite some papers}, since it can describe the neighbor-stage couplings within each scenario not only through the dynamics, but also through constraints and costs. 

\begin{assumption}
\label{ass:equal_size}
    We assume that the shared root and all tail variables have unified size, i.e.,
    $x_0\in\mathbb{R}^{n_b}$ and $x_k^i \in\mathbb{R}^{n_b},~ \forall i\in\mathbb{I}_1^M, k\in\mathbb{I}_1^N$. 
\end{assumption}
\begin{remark}
    Although the last stage $x_N^i$ is smaller since it has no input, it can be padded to satisfy Assumption \ref{ass:equal_size}.
\end{remark}

\subsection{Structure of the Linear System}
The optimization methods considered in this work require repeated solutions of linear systems
\begin{equation}
K\Delta x=r,
\label{eq:linear-system}
\end{equation}
where $K$ is SPD (explained in Appendix~\ref{app:equivalence_lin_sys_sparsity}) and $\Delta x$ is the unknown to be solved. 
For each scenario \(i\), we define the stacked tail:
\[
\Delta x^i:=(\Delta x_1^i,\ldots,\Delta x_N^i)
\in\mathbb R^{Nn_b}.
\]
Grouping the tails and placing the root last gives
\[
\Delta x:=(\Delta x^1,\ldots,\Delta x^M,\Delta x_0) \in\mathbb R^{(MN+1)n_b}.
\]
Similarly, we partition the right-hand side as
\[
r=(r^1,\ldots,r^M,r_0),\quad
r^i=(r_1^i,\ldots,r_N^i),
\]
where \(r_k^i,r_0\in\mathbb R^{n_b}\).
% We group the variables associated with each tail $\Delta x^i_{1:N}$ and place the root variables $\Delta x_0$ last:
% \begin{equation}
% \Delta x \coloneqq
% (\Delta x^1_{1:N}, \Delta x^2_{1:N}, \ldots, 
% \Delta x^M_{1:N}, \Delta x_0
% )
% \in \mathbb{R}^{(MN+1)n_b}.
% \label{eq:variable-partition}
% \end{equation}
With this ordering, the matrix becomes
\begin{equation}
K=
\begin{bmatrix}
K_1 &     &        &     & C_1^{\top} \\
    & K_2 &        &     & C_2^{\top} \\
    &     & \ddots &     & \vdots \\
    &     &        & K_M & C_M^{\top} \\
C_1 & C_2 & \cdots & C_M & R
\end{bmatrix}
\in\mathbb{S}_{++}^{(MN+1)n_b}.
\label{eq:root_coupled_matrix}
\end{equation}
Here, $K_i$ is the temporal system associated with tail $i$, $R\in \mathbb{S}_{++}^{n_b}$ is the root block associated with the variables $\Delta x_0$, and $C_i$ couples tail $i$ to the root. Different tails have no direct coupling and interact only through the root block. 
Each tail matrix $K_i$ has the block-tridiagonal structure:
\begin{equation}
K_i = 
\begin{bmatrix}
D_{i,1} & E_{i,1}^{\top} &                  & \\
E_{i,1} & D_{i,2}        & \ddots           & \\
        & \ddots         & \ddots           & E_{i,N-1}^{\top} \\[0.5em]
        &                & E_{i,N-1}        & D_{i,N}
\end{bmatrix}
\in\mathbb{S}_{++}^{Nn_b}.
\label{eq:tail_block_tridiagonal}
\end{equation}
% The block-tridiagonal pattern follows from the local coupling between consecutive stages in a multiple-shooting optimal-control formulation. A stage without a physical input can be padded so that every temporal block has the same dimension without changing the unpadded solution.

The shared root variables couple directly to only the first temporal block of each tail, making the matrix $C_i$ contain only one non-zero block $C_{i,1}\in\R^{n_b\times n_b}$ at the beginning:
\begin{equation*}
C_i=\begin{bmatrix} C_{i,1}&0&\cdots&0\end{bmatrix}
\in\mathbb{R}^{n_b \times Nn_b}.
\end{equation*}

The resulting coefficient matrix has a block-diagonal-arrow structure with block-tridiagonal tails and a single root-coupling block in each tail. The next section develops a direct solver tailored to this structure.

%%%%%%%%%%%%%%%%%%%%%%%%%%%%%%%%%%%%%%%%%%%%%%%%%%%%%%%%%%%%%%%%%%%%%%%%%%%%%%%%
% ============================== SECTION: Two-Level Parallel Direct Solution ==============================
%%%%%%%%%%%%%%%%%%%%%%%%%%%%%%%%%%%%%%%%%%%%%%%%%%%%%%%%%%%%%%%%%%%%%%%%%%%%%%%%
\section{Two-Level Parallel Direct Solution}
\label{sec:parallel_direct_solution}

\begin{figure*}[t]
\centering
\includegraphics[width=0.85\linewidth]{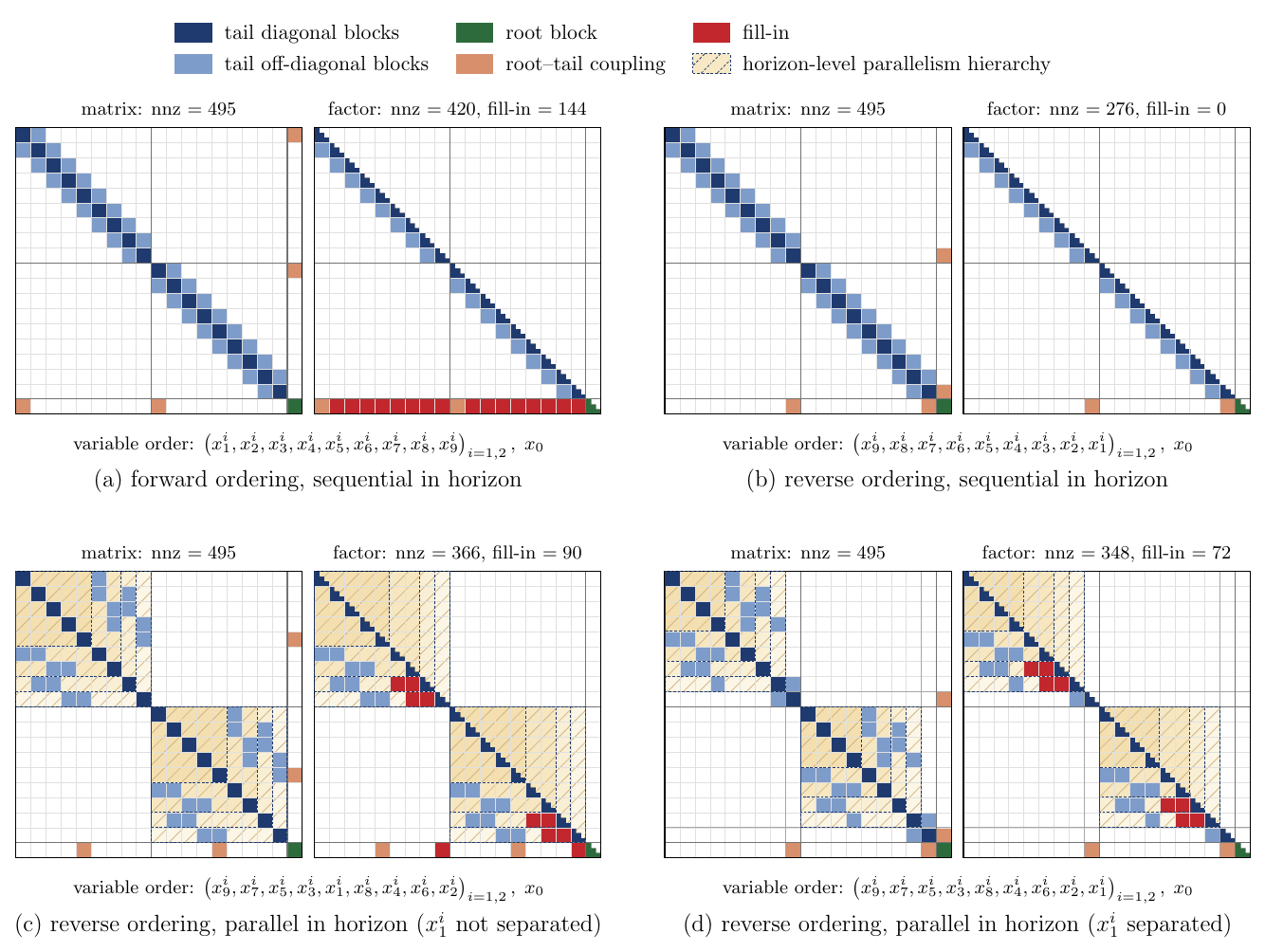}
\caption{Sparsity of the matrix in the linear system to be solved and its Cholesky factor under different variable orderings. We choose $M=2$ scenarios and $N=9$ for a brief demonstration. All block sizes are assumed to be 3.}
\label{fig:ordering}
\end{figure*}

The root-coupled system in~\eqref{eq:root_coupled_matrix} exposes two distinct sources of parallelism:
\begin{itemize}
    \item \emph{Scenario-level parallelism (SLP)}: the tail matrices \(K_i\) have no direct coupling with each other and can be processed concurrently. 
    \item \emph{Horizon-level parallelism (HLP)}: the natural temporal variable order creates a block-tridiagonal structure within each \(K_i\), whose Cholesky factorization and triangular solve can be done in parallel as studied by~\cite{schwan2026socu}.
\end{itemize}
This section introduces a block permutation that exposes parallelism within every tail without spreading its coupling to the root variables. The resulting factorization and triangular solves exploit both levels of parallelism and synchronize the tails only through operations on the root block.

\subsection{Block Permutation for Two-Level Parallelism}
\label{sec:block_permutation_two_level}

To expose parallelism, we apply a dedicated permutation to the matrix \(K\). Let \(\Pi\) denote the permutation matrix of the complete system. The original system is then equivalent to
\begin{equation}
    \underbrace{\Pi K\Pi^\top}_{\widehat K}
    \underbrace{\Pi\Delta x}_{\Delta\widehat x}
    =
    \underbrace{\Pi r}_{\widehat r},
    \label{eq:permuted_linear_system}
\end{equation}
where $\widehat K \coloneqq \Pi K \Pi^\top$ is the permuted matrix, $\Delta \widehat x \coloneqq \Pi \Delta x$ is the permuted unknown and $\widehat r \coloneqq \Pi r$ are the permuted right-hand-side (RHS).
The rows of \(\Pi\) specify the sequence in which the block variables
appear in \(\Delta \widehat x\) and $\widehat r$.  Consequently, factorizing \(\widehat K\) is
algebraically equivalent to factorizing \(K\) according to that
variable order.

We first reverse the order of variables corresponding to tail \(i\) on the block-level, and the reason for doing this will become clear later: 
\begin{equation*}
    \Omega_i \Delta x^i =
    \left(\Delta x_N^i, \ldots, \Delta x_1^i\right), \quad
    \Omega_i r^i =
    \left(r_N^i, \ldots,  r_1^i\right),
\end{equation*}
where \(\Omega_i\in\mathbb{R}^{Nn_b\times Nn_b}\) is the corresponding block-reversal permutation matrix. 
Then we construct the block-wise permutation $\Gamma_{N-1}$ from~\cite{schwan2026socu} defined in Section~\ref{sec:parallel_block_tridiagonal} and apply it onto $(\Delta x_N^i, \ldots, \Delta x_2^i)$ and $(r_N^i,\ldots, r_2^i)$. 
The block \( \Delta x_1^i\), which is the only tail block coupled to the root variables $\Delta x_0$, is excluded from \(\Gamma_{N-1}\) and kept last. 

The permutation matrix for the $i$-th tail is therefore
\begin{equation*}
    \Pi_i =
    \begin{bmatrix}
        \Gamma_{N-1} & 0 \\
         0  & I_{n_b}
    \end{bmatrix}\Omega_i,
\end{equation*}
and we denote its corresponding permuted unknown and RHS associated with the $i$-th tail by
\[
\Delta\widehat x^i:=\Pi_i\Delta x^i,
\quad
\widehat r^i := \Pi_i r^i.
\]
Since we consider the case where all tails have the same horizon length and block dimensions, the
same permutation pattern is applied independently to every tail. The complete permutation matrix is therefore
\begin{equation*}
    \Pi= \operatorname{blkdiag} \left(\Pi_1,\Pi_2,\ldots,\Pi_M,I_{n_b}\right),
\end{equation*}
and the permuted matrix is given by
\begin{equation}\label{eq:permuted_root_coupled_matrix}
\underbrace{
\begin{bmatrix}
\widehat K_1 &               &        &              & \widehat C_1^\top \\
             & \widehat K_2  &        &              & \widehat C_2^\top \\
             &               & \ddots &              & \vdots \\
             &               &        & \widehat K_M & \widehat C_M^\top \\
\widehat C_1 & \widehat C_2  & \cdots & \widehat C_M & R
\end{bmatrix}
}_{\widehat K }
\underbrace{
\begin{bmatrix}
    \Delta \widehat x^1 \\
    \Delta \widehat x^2 \\
    \vdots \\
    \Delta \widehat x^M \\
    \Delta x_0
\end{bmatrix}
}_{\Delta \widehat x}
=
\underbrace{
\begin{bmatrix}
    \widehat r^1 \\
    \widehat r^2 \\
    \vdots \\
    \widehat r^M \\
    r_0
\end{bmatrix}
}_{\widehat r},
\end{equation}

% \begin{equation}
% \widehat K \coloneqq \Pi K \Pi^\top =
% \begin{bmatrix}
% \widehat K_1 &               &        &              & \widehat C_1^\top \\
%              & \widehat K_2  &        &              & \widehat C_2^\top \\
%              &               & \ddots &              & \vdots \\
%              &               &        & \widehat K_M & \widehat C_M^\top \\
% \widehat C_1 & \widehat C_2  & \cdots & \widehat C_M & R
% \end{bmatrix},
% \label{eq:permuted_root_coupled_matrix}
% \end{equation}
which retains the block-diagonal-arrow structure in $\widehat K$ and preserves scenario-level independence.
Each tail is given by \(\widehat K_i=\Pi_iK_i\Pi_i^\top\), which allows exposing horizon-level
parallelism independently inside all tails as described in Section~\ref{sec:parallel_block_tridiagonal}.

Moreover, since \(\Delta x_1^i\) remains last in every tail,
the permuted coupling matrix $\widehat C_i$ has the following form:
\begin{equation}
    \widehat C_i=C_i\Pi_i^\top =
    \begin{bmatrix}
        0 & \cdots & 0 & C_{i,1}
    \end{bmatrix}
    \in\mathbb{R}^{n_b \times Nn_b},
    \label{eq:permuted_tail_coupling}
\end{equation}
which contains only one non-zero block in the end.

Figure~\ref{fig:ordering} illustrates the effect of variable reversal and separated treatment of \(\Delta x_1^i\).
Under forward temporal ordering, eliminating \(\Delta x_1^i\) creates a coupling between the root and \(\Delta x_2^i\), causing subsequent eliminations to propagate this coupling along the tail as shown in Figure~\ref{fig:ordering}(a), creating considerable fill-in blocks.
Figure~\ref{fig:ordering}(b) reverses the variables in each tail to avoid the fill-in, but the factorization remains sequential along the horizon.  Applying the recursive block permutation to the complete reversed tail as in Figure~\ref{fig:ordering}(c) exposes horizon-level parallelism, but allows \(\Delta x_1^i\) to enter an earlier elimination level and again spreads the root coupling.  The proposed permutation, as illustrated in Figure~\ref{fig:ordering}(d), applies \(\Gamma_{N-1}\) only to \(\Delta x_N^i,\ldots,\Delta x_2^i\) and fixes \(\Delta x_1^i\) at the end.
Placing \(\Delta x_1^i\) last ensures that the root–tail coupling remains confined to a single block in the Cholesky factor, as established in Section IV-B, while the remaining tail variables admit horizon-parallel elimination.

\subsection{Parallel Cholesky Factorization}

\begin{algorithm}[!t]
\caption{Two-Level Parallel Cholesky Factorization}
\label{alg:parallel_factorization}
\begin{algorithmic}[1]
\Require Tail matrices $\{K_i\}_{i=1}^{M}$, coupling blocks $\{C_{i,1}\}_{i=1}^{M}$, root block $R$, and permutations $\{\Pi_i\}_{i=1}^{M}$
\ForAll{$i=1,\ldots,M$ in parallel} \Comment{SLP}
    \State Compute the permuted matrix $\widehat K_i$
    \State $L_i \gets \operatorname{chol}(\widehat K_i)$ \Comment{Tail factor., $\mathcal{O}(n_b^3 \log N)$, HLP}
    \State $F_{i,1} \gets C_{i,1} L_{i,1}^{-\top}$ \Comment{Tail coupling solve, $\mathcal{O}(n_b^3)$}
    \State $U_i\gets F_{i,1}F_{i,1}^{\top}$ \Comment{Local Schur update, $\mathcal{O}(n_b^3)$}
\EndFor
\State $\widetilde R\gets R-\sum_{i=1}^{M}U_i$ \Comment{Root reduction, $\mathcal{O}(n_b^2 \log M)$}
\State $L_R \gets \operatorname{chol}(\widetilde R)$ \Comment{Root factorization, $\mathcal{O}(n_b^3)$}
\State \Return $\{L_i\}_{i=1}^{M}$, $\{F_{i,1}\}_{i=1}^{M}$, and $L_R$
\end{algorithmic}
\end{algorithm}

We consider the permuted matrix \(\widehat K\)
in~\eqref{eq:permuted_root_coupled_matrix} and compute its Cholesky
factor \(L\) such that \(LL^\top=\widehat K\), where $L$ is lower triangular and inherits the block-diagonal-arrow form:
\begin{equation*}
L=
\begin{bmatrix}
L_1 &     &        &     &  \\
    & L_2 &        &     &  \\
    &     & \ddots &     &  \\
    &     &        & L_M &  \\
F_1 & F_2 & \cdots & F_M & L_R
\end{bmatrix}.
\end{equation*}

Matching the blocks in \(LL^\top=\widehat K\) yields the following five steps:
\begin{enumerate}
    \item Tail factorization: \(L_i\gets \operatorname{chol}(\widehat K_i)\)
    for all \(i\in\mathbb{I}_1^M\).
    \item Tail coupling solve: \(F_i \gets \widehat C_i L_i^{-\top}\) for all
    \(i\in\mathbb{I}_1^M\).
    \item Local Schur update: \(U_i\gets F_iF_i^\top\) for all
    \(i\in\mathbb{I}_1^M\).
    \item Root reduction: \(\widetilde R\gets
    R-\sum_{i=1}^M U_i\).
    \item Root factorization: \(L_R\gets \operatorname{chol}(\widetilde R)\).
\end{enumerate}

% \begin{table}[h]
%     \centering
%     \caption{Steps of the parallel factorization.}
%     \label{tab:factorization_steps}
%     \begin{tabular}{@{}ll@{}}
%         \toprule
%         Step & Operation \\
%         \midrule
%         Tail factorization  & \(L_i \gets \operatorname{chol}(\widehat K_i), ~\forall i\in\mathbb{I}_1^M\) \\
%         Tail coupling solve & \(F_i \gets \widehat C_i L_i^{-\top}, ~\forall i\in\mathbb{I}_1^M\) \\
%         Local Schur update  & \(U_i \gets F_i F_i^\top, ~\forall i\in\mathbb{I}_1^M\) \\
%         Root reduction      & \(\widetilde R \gets R - \sum_{i=1}^M U_i\) \\
%         Root factorization  & \(L_R \gets \operatorname{chol}(\widetilde R)\) \\
%         \bottomrule
%     \end{tabular}
% \end{table}

For each tail, the factorization, coupling solve, and local Schur update are performed in sequence. These computations are independent across tails and can therefore proceed concurrently. The root reduction combines their contributions before the root block is factorized.
Specifically, we demonstrate the first three steps in more detail:

\subsubsection{Tail factorization}
Within each tail, the factorization of \(\widehat K_i\) uses the method in~\cite[Algorithm~4]{schwan2026socu} as we introduced in Section~\ref{sec:parallel_block_tridiagonal}, achieving horizon-level parallelization.

\subsubsection{Tail coupling solve}
Since \(\widehat C_i^\top\) has non-zeros only on its last block row, solving \(F_i \gets \widehat C_i L_i^{-\top}\) produces all zeros until the final block of $F_i$, i.e., 
\begin{equation*}
F_i=\begin{bmatrix}0&\cdots&0&F_{i,1}\end{bmatrix}
\in\mathbb{R}^{n_b \times Nn_b}.
\end{equation*}
% where the last block corresponds to \(\Delta x_1^i\)
Therefore, when computing $F_i \gets \widehat C_i L_i^{-\top}$, instead of using the complete matrices, it can be reduced to: 
\begin{equation*}
    F_{i,1} \gets C_{i,1} L_{i,1}^{-\top},
\end{equation*}
where $L_{i,1}\in\mathbb{R}^{n_b\times n_b}$ is the final diagonal block of $L_i$, associated with the original stage variable $\Delta x_1^i$.
A formal proof is given in Appendix~\ref{app:proof_tail_coupling_solve}.

\subsubsection{Local Schur update}
Since $F_i$ only contains one non-zero block $F_{i,1}$, $U_i \gets F_i F_i^\top$ can be simplified as
\begin{equation*}
U_{i} \gets F_{i,1} F_{i,1}^\top.
\end{equation*}

Algorithm~\ref{alg:parallel_factorization} summarizes the two-level parallel Cholesky factorization, with the time complexity of each step annotated in the comments, assuming sufficient parallel computing units are available. By summing them up, the overall time complexity of the proposed Cholesky factorization is 
$\mathcal{O}(n_b^3 \log N + n_b^2\log M)$.
% $\mathcal{O}(n_b^3 \log N)$.

\begin{algorithm}[!t]
\caption{Two-Level Parallel Triangular Solves}
\label{alg:parallel_solve}
\begin{algorithmic}[1]
\Require Tail factors $\{L_i\}_{i=1}^{M}$, coupling factors \(\{F_{i,1}\}_{i=1}^M\), permutations
\(\ \{\Pi_i\}_{i=1}^M\), root factor \(L_R\) and RHS \(r\)
\Statex \textit{Forward substitution}
\ForAll{\(i=1,\ldots,M\) in parallel} \Comment{SLP}
    \State \(\widehat r^i \gets \Pi_i r^i\)
    \State \(\Delta y^i \gets L_i^{-1} \widehat r^i\) \Comment{Tail fwd subs., $\mathcal{O}(n_b^2 \log N)$, HLP}
\EndFor
\State \(\widetilde r_0 \! \gets \! r_0 \!- \! \sum_{i=1}^{M} \! F_{i,1} \Delta y_{1}^i\) \Comment{Root reduc., $\mathcal{O}(n_b^2 \!+\! n_b \! \log \! M)$}
\State \(\Delta y_0 \gets L_R^{-1} \widetilde r_0\) \Comment{Root fwd subs., $\mathcal{O}(n_b^2)$}
\Statex \textit{Backward substitution}
\State \(\Delta x_0 \gets L_R^{-\top} \Delta y_0\) \Comment{Root bwd subs., $\mathcal{O}(n_b^2)$}
\ForAll{\(i=1,\ldots,M\) in parallel} \Comment{SLP}
    \State \( \Delta y_{1}^i \gets \Delta y_{1}^i - F_{i,1}^\top\Delta x_0\)  \Comment{Root prop., $\mathcal{O}(n_b^2)$}
    \State \( \Delta\widehat x^i \! \gets \! L_i^{-\top} \!\! \Delta y^i\)   \Comment{Tail bwd subs., $\mathcal{O}(n_b^2 \log \! N)$, HLP}
    \State \(\Delta x^i \gets \Pi_i^\top\Delta\widehat x^i\)
\EndFor
\State \(\Delta x\gets (\Delta x^1,\ldots,\Delta x^M,\Delta x_0)\)  \Comment{Assemble solution}
\State \Return \(\Delta x\)
\end{algorithmic}
\end{algorithm}

\subsection{Parallel Triangular Solve}
Given the permuted right-hand side \(\widehat r\), we solve
\[
L\Delta y=\widehat r,
\quad
L^\top\Delta\widehat x=\Delta y,
\]
by forward and backward substitution, respectively. Here,
\[
\Delta y=(\Delta y^1,\ldots,\Delta y^M,\Delta y_0),
\]
where \(\Delta y^i\in\mathbb R^{Nn_b}\) and \(\Delta y_0\in\mathbb R^{n_b}\). 
The solution consists of six stages:
\begin{enumerate}
    \item Tail forward substitution:
    \(\Delta y^i \gets L_i^{-1} \widehat r^i\) for all \(i\in\mathbb{I}_1^M\).
    \item Root reduction:
    % \(\widetilde r_0\gets r_0-\sum_{i=1}^M F_{i,1} \Delta y_{1}^i\),
    % where \(\Delta y_{1}^i\) is the last block of \(\Delta y_i\).
    \(\widetilde r_0\gets r_0-\sum_{i=1}^M F_{i} \Delta y^i\).
    \item Root forward substitution: 
    \(\Delta y_0 \gets L_R^{-1} \widetilde r_0\).
    \item Root backward substitution:
    \(\Delta x_0 \gets L_R^{-\top} \Delta y_0\).
    \item Root propagation:
    % \( \Delta \tilde{y}_{1}^i \gets \Delta y_{1}^i - F_{i,1}^\top\Delta x_0\) for all \(i\in\mathbb{I}_1^M\).
    \( \Delta \tilde{y}^i \gets \! \Delta y^i -\! F_{i}^\top\Delta x_0\) for all \(i\in\mathbb{I}_1^M\).
    \item Tail backward substitution:
    \(\Delta\widehat x_i \gets L_i^{-\top} \Delta \tilde{y}_i\), for all \(i\in\mathbb{I}_1^M\).
    % \item Inverse permutation:
    % \(\Delta x_i\gets\Pi_i^\top\Delta\widehat x_i\) for all
    % \(i\in\mathbb{I}_1^M\), with
    % \(\Delta x_0\gets\Delta\widehat x_0\).
\end{enumerate}

The tail operations in the first, fifth, and sixth stages are independent across tails and can therefore be performed concurrently at the scenario level. Within each tail, the forward and backward solves use the horizon-level parallel algorithm in~\cite[Algorithm~5]{schwan2026socu}.

Similar to the factorization, since \(F_i=[\,0\ \cdots\ 0\ F_{i,1}\,]\), the root reduction depends only on the last block of \(\Delta y^i\), and root propagation modifies only that block, we can rewrite the root reduction step as
\(
    \widetilde r_0\gets r_0-\sum_{i=1}^M F_{i,1} \Delta y_{1}^i
\)
and the root propagation step as
\(
  \Delta \tilde{y}_{1}^i \gets \Delta y_{1}^i - F_{i,1}^\top\Delta x_0  ~ \text{for all } i\in\mathbb{I}_1^M,
\)
where \(\Delta y_{1}^i\) is the last block of \(\Delta y_i\). 
% The subscript \(1\) in \(\Delta y_1^i\) refers to the original temporal stage, whose block occupies the last position after permutation.

% also limits the interaction between each
% tail and the root during repeated solves. The forward coupling product
% is \(F_{i,1}y_{i,1}\), and the backward coupling correction
% \(F_{i,1}^\top\Delta\widehat x_0\) is applied only to the last block of
% the tail right-hand side. Thus, only one block per tail is exchanged
% with the root; the subsequent solves with the complete \(L_i\) still
% exploit horizon-level parallelism.

Algorithm~\ref{alg:parallel_solve} summarizes the two-level parallel triangular solve, and the time complexity per step is marked in the comments, assuming sufficient parallel computing units. The overall time complexity is $\mathcal{O}(n_b^2 \log N + n_b \log M)$.

%%%%%%%%%%%%%%%%%%%%%%%%%%%%%%%%%%%%%%%%%%%%%%%%%%%%%%%%%%%%%%%%%%%%%%%%%%%%%%%%
% ============================== SECTION: Numerical Experiments ==============================
%%%%%%%%%%%%%%%%%%%%%%%%%%%%%%%%%%%%%%%%%%%%%%%%%%%%%%%%%%%%%%%%%%%%%%%%%%%%%%%%
\section{Numerical Experiments}
\label{sec:experiments}

\begin{figure*}[!t]
    \centering
     \begin{subfigure}[b]{0.95\columnwidth}
         \centering
         \includegraphics[width=\textwidth]{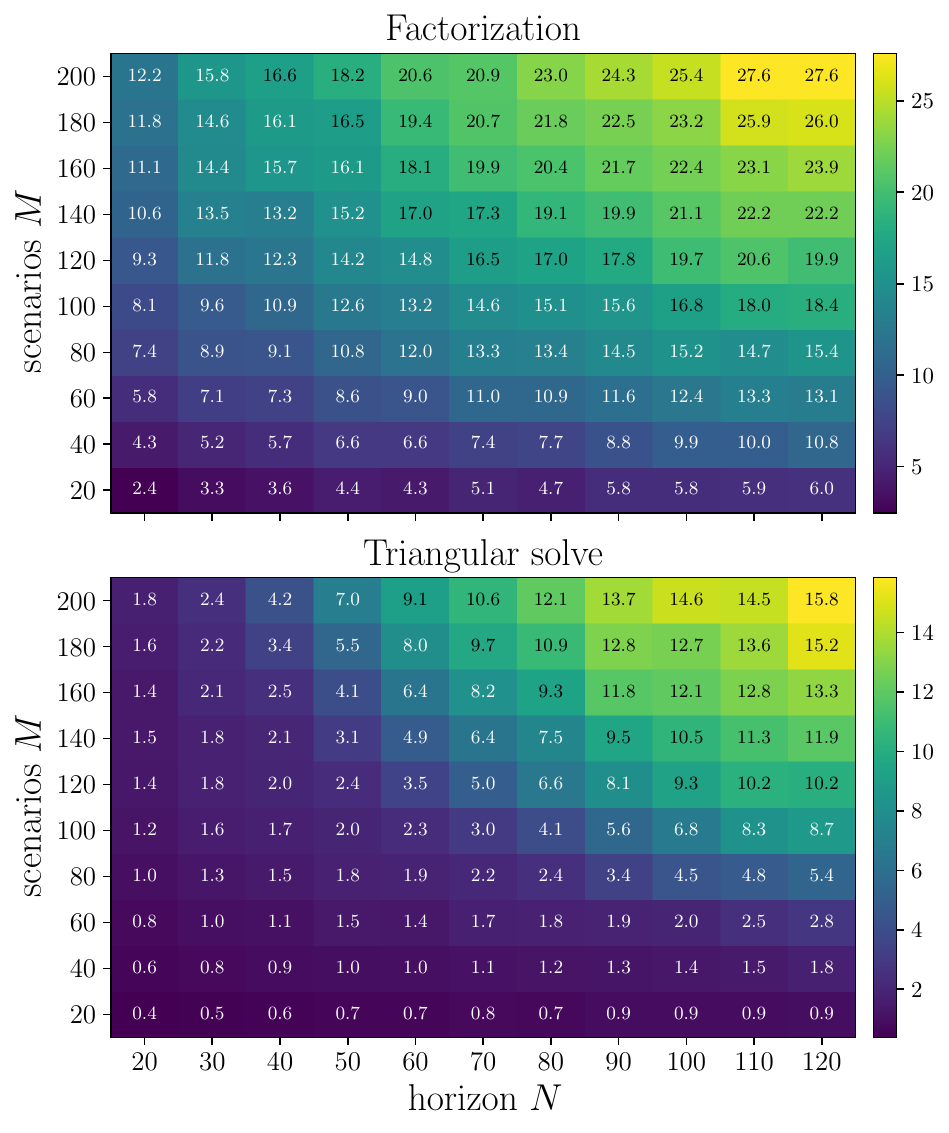}
         \caption{Speedup against \texttt{PARDISO} (8 threads).}
         \label{fig:speedup_pardiso}
     \end{subfigure}
     \hfill
     \centering
     \begin{subfigure}[b]{0.95\columnwidth}
         \centering
         \includegraphics[width=\textwidth]{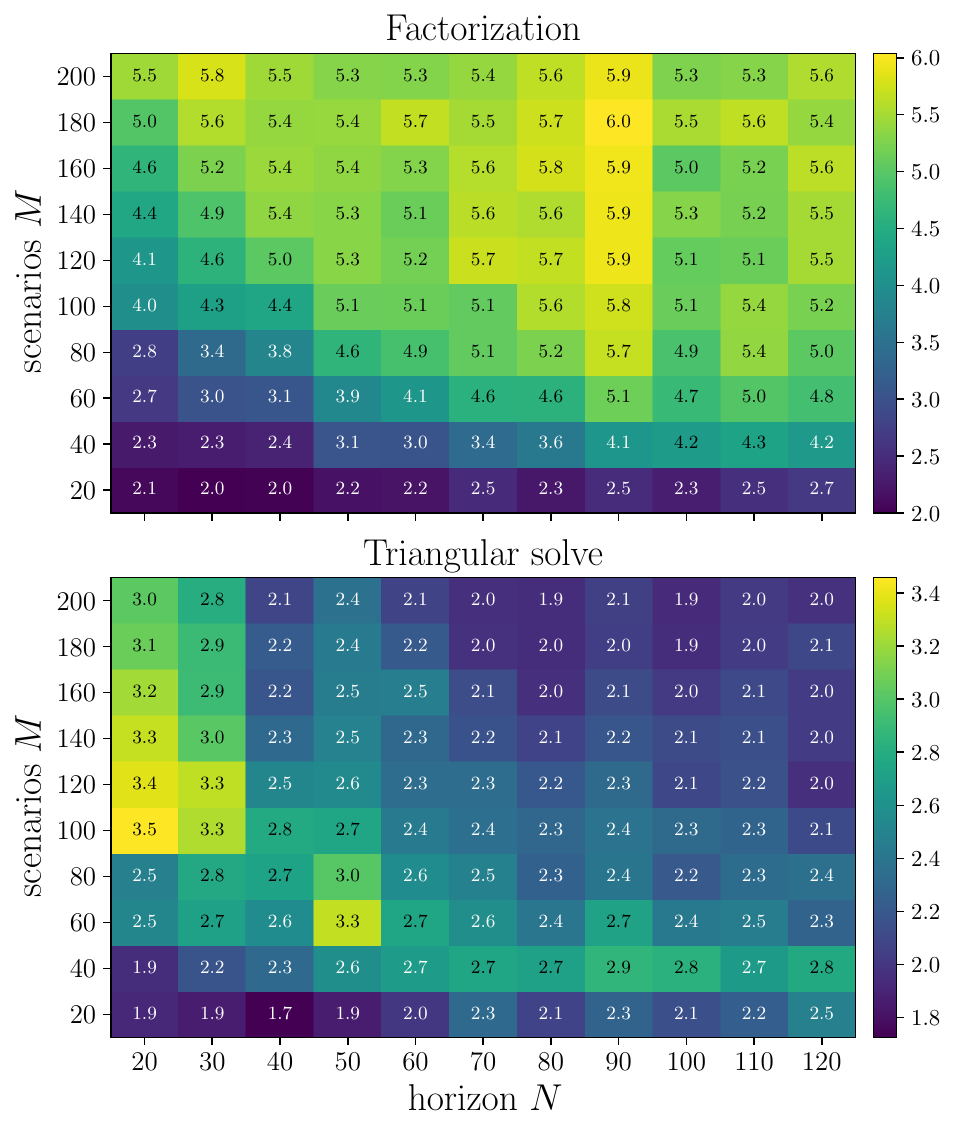}
         \caption{Speedup against \texttt{cuDSS}.}
         \label{fig:speedup_cudss}
     \end{subfigure}
     \caption{Speedups of the proposed parallel factorization and triangular solve compared to \texttt{PARDISO} and \texttt{cuDSS}.}
     \label{fig:speedup_factor_and_solve}
\end{figure*}

To verify the effectiveness of our proposed method, we implement it using NVIDIA Warp~\cite{macklin2022warp}, a Python framework to generate high-performance kernels for NVIDIA GPUs. 
For the horizon-level parallel factorization and triangular solve, we use the existing open-source implementation \texttt{socu}~\cite{schwan2026socu}. 
Fixed sparsity permits a one-time analysis phase that allocates persistent workspaces and generates CUDA kernels. The factorization and solve pipelines are captured separately as CUDA graphs to reduce kernel launch overhead.

All experiments are carried out on a workstation equipped with
% 24-core (8 performance cores and 16 efficiency cores) 
an Intel Core Ultra 9 285K CPU, 64 GB of system memory, and an NVIDIA GeForce RTX 5090 GPU with 32 GB of device memory. All solvers use double-precision (float64) arithmetic.

\subsection{Experiment Settings}

% \begin{figure}[htbp]
%     \centering
%     \includegraphics[width=1.0\linewidth]{figures/quadrotor_endpoint_times_T30_combined.pdf}
%     \caption{Time used for factorization and triangular solve with fixed horizon length $N=30$.}
%     \label{fig:abs_time_fixed_horizon}
% \end{figure}
\begin{figure*}[!t]
    \centering
    \begin{subfigure}[b]{\columnwidth}
        \centering
        \includegraphics[width=\textwidth]{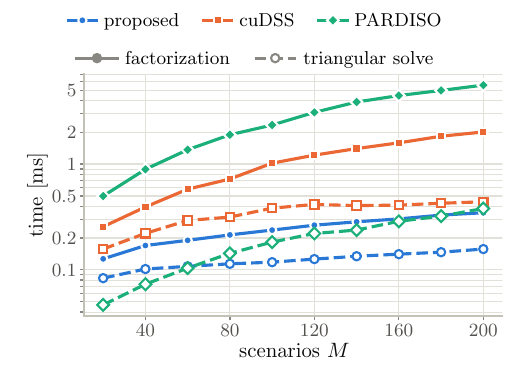}
        \caption{Fixed horizon length $N=30$ with varying $M$.}
        \label{fig:abs_time_fixed_horizon}
    \end{subfigure}
    \hfill
    \begin{subfigure}[b]{\columnwidth}
        \centering
        \includegraphics[width=\textwidth]{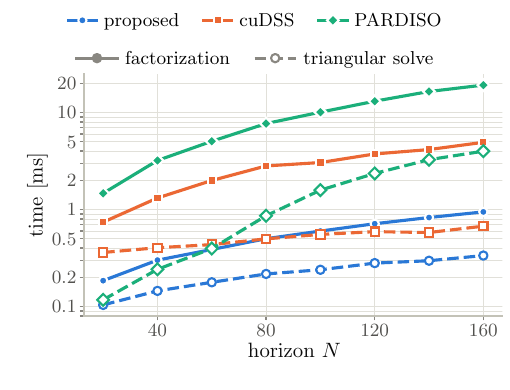}
        \caption{Fixed number of scenarios $M=100$ with varying $N$.}
        \label{fig:abs_time_fixed_scenarios}
    \end{subfigure}
    \caption{Absolute factorization and triangular-solve times along representative slices of the parameter grid.}
    \label{fig:absolute_times}
\end{figure*}

We consider a quadrotor navigating in 3-D space among spherical
obstacles under uncertainty. Its model has 12 states, including
position, velocity, ZYX Euler angles, and body rates, and 4 inputs
corresponding to the rotor thrusts, leading to a block size \(n_b=16\). 
% The continuous-time dynamics include gravity, linear translational and rotational drag, and
% the gyroscopic term, and are discretized using an explicit fourth-order Runge--Kutta method.
% The uncertainty follows the immediate-branching structure introduced
% in Section~\ref{sec:system_structure}. All scenarios share the complete first node
% and evolve independently thereafter. 
Each scenario independently samples a vehicle mass with a 10\% relative standard deviation and a time-invariant wind acceleration.

We sweep $M \in \{20,40,\ldots,200\}$ scenarios and
$N \in \{20,30,\ldots,160\}$ stages to cover a wide range of
configurations. We compare our solver against two state-of-the-art
general sparse direct solvers: Intel MKL \texttt{PARDISO}~\cite{pardiso}%
\footnote{We call \texttt{PARDISO}'s low-level Python binding and avoid dynamic memory allocation. We set \texttt{MKL\_DYNAMIC=FALSE}, \texttt{OMP\_DYNAMIC=FALSE}, and
\texttt{OMP\_WAIT\_POLICY=ACTIVE} to reduce thread scheduling overhead.},
and NVIDIA \texttt{cuDSS}~\cite{cudss}. 
For \texttt{PARDISO}, we use 8 threads bound to the CPU's 8 performance cores, with default nested dissection ordering and iterative refinement turned off.
The one-time costs, including symbolic analysis, kernel generation, memory allocation, and CUDA graph capture, are excluded from the reported time. For our solver and \texttt{cuDSS}, data remain device-resident.
Each value is the median of 50 executions after 10 warm-up runs.

\subsection{Results}
Across all experiments, the infinity norm of the difference between the solutions returned from our implementation and those from the two baseline solvers is around or below $10^{-13}$, demonstrating the correctness within double-precision tolerance. 
The speedup maps in Fig.~\ref{fig:speedup_factor_and_solve} report the speedups relative to the 2 baseline solvers. Relative to \texttt{PARDISO}, the proposed factorization is \(2.4\)--\(27.6\times\) faster over the grid. The triangular-solve speedup ranges from approximately \(0.4\times\) at the smallest problem to \(15.8\times\). When the problem size is small, \texttt{PARDISO} is faster in the triangular solve than the proposed method, which is expected since the workload is not big enough to amortize the fixed overhead of the GPU, such as kernel launch and synchronization. Once either the scenario count or the horizon length increases, the additional scenario- and horizon-level parallel work is sufficient to amortize these costs, and the proposed triangular solve takes advantage of this.

The comparison with \texttt{cuDSS} isolates the benefit over a general-purpose GPU sparse direct solver. The proposed method achieves \(2.0\)--\(6.0\times\) factorization speedups and \(1.7\)--\(3.5\times\) triangular solve speedups, remaining faster at every tested point. The advantage generally increases with \(M\) and \(N\), but less regularly than the speedup pattern in Fig.~\ref{fig:speedup_pardiso}, possibly related to the
superpanel strategy of \texttt{cuDSS}, which groups some columns into dense panels.
The speedups over \texttt{cuDSS} likely arise from exploiting the specific structure of the matrix. Operating on dense blocks may improve memory locality and reduce sparse-indexing overhead, while explicitly scheduling independent tails and parallel elimination levels exposes concurrency at both the scenario and horizon levels. The tailored ordering also confines each root–tail coupling to a single block in the factor, limiting the associated fill-in and data movement. Although \texttt{cuDSS} exploits sparsity through general-purpose analysis and factorization, its ordering and scheduling are not necessarily tailored to these structural properties.

Figure~\ref{fig:abs_time_fixed_horizon} shows the sensitivity of execution time to the scenario count and prediction horizon. Along scenario count, as shown in Figure~\ref{fig:abs_time_fixed_horizon} with fixed \(N=30\), increasing \(M\) from 20 to 200 increases factorization time only from approximately \(0.15\) to \(0.35\) ms and triangular solve time from approximately \(0.10\) to \(0.16\) ms. This weak dependence on scenario count demonstrates the effectiveness of concurrent tail processing on the GPU over the tested range. Along horizon length, as shown in Figure~\ref{fig:abs_time_fixed_scenarios} with fixed \(M=100\), increasing \(N\) from 40 to 160 raises the factorization time and the triangular solve time by \(3.1\times\) and \(2.3\times\) for a four times longer horizon. Over the same range, \texttt{PARDISO}'s times grow by \(6.0\times\) and \(16.6\times\). The factorization remains below \(1\) ms and the triangular solve below \(0.35\) ms at \(M=100\), \(N=160\), highlighting the ability to handle large scenario collections and long
horizons at low absolute latency.

%%%%%%%%%%%%%%%%%%%%%%%%%%%%%%%%%%%%%%%%%%%%%%%%%%%%%%%%%%%%%%%%%%%%%%%%%%%%%%%%
% ============================== SECTION: Conclusion and Future Work ==============================
%%%%%%%%%%%%%%%%%%%%%%%%%%%%%%%%%%%%%%%%%%%%%%%%%%%%%%%%%%%%%%%%%%%%%%%%%%%%%%%%
\section{Conclusion and Future Work}
\label{sec:conclusion}

This paper presented a direct linear-system solver for branch MPC where multiple scenarios are coupled through a single root node. By combining scenario-level and horizon-level parallelism, the proposed method directly accelerates both Cholesky factorization and triangular solve for the underlying linear system in the optimization algorithm on the GPU. The experimental results show substantial speedups over state-of-the-art general sparse direct solvers, highlighting the benefits of our method.

Future work will consider formulations in which the entire input sequence is shared across all scenarios. In that case, the scenario tails are coupled through horizon-wide input variables rather than a single root node, presenting a different sparsity pattern which requires new methods to exploit the potential parallelism.

\bibliographystyle{IEEEtran}
\bibliography{IEEEabrv,references}

\appendices

\section{Linear System Sparsity in IPM and ADMM}
\label{app:equivalence_lin_sys_sparsity}
% \begin{equation}
% \begin{aligned}
%     \min_x \quad & \ell(x) \\
%     \text{s.t.} \quad & f(x) = 0 \\
%                     & h(x) \leq 0
% \end{aligned}
% \end{equation}
% For the interior-point method, the KKT system is:
% \begin{equation}
%     \begin{bmatrix}
%         \nabla_x^2 \ell + \delta_pI & (\nabla f)^\top & (\nabla h)^\top \\
%         \nabla_x f & -\delta_d I & \\
%         \nabla_x h &  &-\delta_d I
%     \end{bmatrix}
% \end{equation}

For nonlinear instances of~\eqref{eq:scenario_problem}, sequential quadratic programming (SQP) generates a sequence of QP subproblems by linearizing the constraints and constructing a quadratic approximation of the objective. Therefore, we consider QP subproblems that preserve the coupling structure of~\eqref{eq:scenario_problem}. It therefore suffices to study the linear systems arising from the following QP:
\begin{equation}
\label{eq:appendix_qp}
\begin{aligned}
\min_x\quad & \tfrac{1}{2}x^\top P x+c^\top x\\
\text{s.t.}\quad & Ax=b, \quad  h_l \leq Gx\leq h_u,
\end{aligned}
\end{equation}
where $P$ is symmetric positive semi-definite, $A$ is the equality-constraint matrix, and $G$ is the
inequality-constraint matrix. We next show that standard ADMM and primal-dual IPM
formulations differ in their numerical weights but not in the structural
sparsity of the reduced system.

\paragraph{IPM}
For the primal-dual IPM, after eliminating the slack variables, the regularized KKT matrix has the form
\begin{equation}
\label{eq:appendix_ipm_kkt}
K_{\mathrm{IPM}}=
\begin{bmatrix}
P+\tau I & A^\top & G^\top\\
A & -\delta I & 0\\
G & 0 & -(W+\delta I)
\end{bmatrix},
\end{equation}
where $\tau,\delta>0$ are regularization parameters and $W$ is a diagonal matrix with all positive entries. See~\cite{schwan2023piqp} for details.

\paragraph{ADMM}
Following the formulation of \texttt{OSQP}~\cite[Eq. (24)]{stellato2020osqp}, the linear system system to be solved for~\eqref{eq:appendix_qp} has the form
\begin{equation}
\label{eq:appendix_admm_kkt}
K_{\mathrm{ADMM}}=
\begin{bmatrix}
P+\sigma I & A^\top & G^\top\\
A & -\rho^{-1}I & 0\\
G & 0 & -\rho^{-1}I
\end{bmatrix},
\end{equation}
% \begin{equation}
% \label{eq:appendix_admm_kkt}
% K_{\mathrm{ADMM}}=
% \begin{bmatrix}
% P+\sigma I & \bar G^\top \\
% \bar G &  -\rho^{-1}I
% \end{bmatrix},
% \end{equation}
% where $\sigma$ and $\rho$ are positive parameters, and $\bar G \coloneqq [G^\top~ A^\top ~ -A^\top]^\top$ since we can view the $Ax=b$ as two-sided

We can see that matrices~\eqref{eq:appendix_ipm_kkt} and ~\eqref{eq:appendix_admm_kkt} have the same sparsity pattern. Solvers like \texttt{OSQP} directly factorize the matrix~\eqref{eq:appendix_admm_kkt}. Alternatively, we can work on a reduced system by eliminating the second and third block rows, which yields a linear system with a matrix
\begin{equation}
K=P+\tau I+\delta^{-1}A^\top A +G^\top(W+\delta I)^{-1}G,
\end{equation}
which is guaranteed to be symmetric positive definite.

% \section{Reduced KKT Structure of Branch MPC}
% \label{app:branch_kkt_structure}
% For problem~\eqref{eq:scenario_problem}, every scenario-dependent term involves only the shared root variable \(x_0\) and the tail variables \(x^i_{1:N}\) of one scenario. No objective or constraint term directly couples two distinct scenario tails. Consequently, after grouping the variables by scenario, the resulting matrix has a block-diagonal-arrow structure as in~\eqref{eq:root_coupled_matrix}. Moreover, the stage-local costs and constraints couple only neighboring temporal blocks along each tail, giving each tail matrix \(K_i\) a block-tridiagonal structure. Since \(x_0\) enters only the first stage of each tail, its coupling with \(K_i\) is confined to the single block \(C_{i,1}\) as in~\eqref{eq:permuted_tail_coupling}. 

\section{Block-sparsity in Tail Coupling Solve}
\label{app:proof_tail_coupling_solve}
\begin{theorem}
\label{thm:single_block_coupling_factor}
Consider the permuted coupling matrix \(\widehat C_i\)
in~\eqref{eq:permuted_tail_coupling}. Let \(L_i\) be the Cholesky
factor of \(\widehat K_i\), and let \(F_i\) be the unique solution of
\(F_iL_i^\top=\widehat C_i\). Then \(F_i\) has the same terminal-block
support as \(\widehat C_i\), namely,
$F_i=\begin{bmatrix}0&\cdots&0&F_{i,1}\end{bmatrix}$.
\end{theorem}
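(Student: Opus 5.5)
The plan is to exploit the fact that \(L_i^\top\) is upper triangular, so the system \(F_iL_i^\top=\widehat C_i\) can be solved block-column by block-column from the left. Partition \(L_i\) conformally with the \(N\) block variables of the permuted tail, so that \(L_i\) is block lower triangular with invertible diagonal blocks. Its last diagonal block is \(L_{i,1}\), the one associated with \(\Delta x_1^i\). Write \(F_i=[F_i^{(1)},\ldots,F_i^{(N)}]\) and \(\widehat C_i=[0,\ldots,0,C_{i,1}]\) with \(n_b\times n_b\) blocks. Invertibility of the diagonal blocks follows because \(\widehat K_i=\Pi_iK_i\Pi_i^\top\) is SPD, so its Cholesky factor exists with positive diagonal. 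This also gives the uniqueness of \(F_i\) claimed in the statement.

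Block column \(j\) of \(F_iL_i^\top\) equals \(\sum_{k} F_i^{(k)}(L_i^\top)_{kj}=\sum_{k\le j}F_i^{(k)}(L_i)_{jk}^\top\), since \((L_i)_{jk}=0\) for \(k>j\). Setting this equal to block column \(j\) of \(\widehat C_i\) gives
\[
F_i^{(j)}(L_i)_{jj}^\top=\widehat C_i^{(j)}-\sum_{k<j}F_i^{(k)}(L_i)_{jk}^\top .
\]
I then argue by induction on \(j=1,\ldots,N-1\). For \(j=1\) the right-hand side is \(\widehat C_i^{(1)}=0\), so \(F_i^{(1)}=0\). If \(F_i^{(k)}=0\) for all \(k<j\le N-1\), the right-hand side is again \(\widehat C_i^{(j)}=0\), hence \(F_i^{(j)}=0\).

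For the final block \(j=N\), the equation reduces to \(F_i^{(N)}L_{i,1}^\top=C_{i,1}\). This identifies \(F_{i,1}=C_{i,1}L_{i,1}^{-\top}\), which is exactly the reduced computation used in Algorithm~\ref{alg:parallel_factorization}.

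I expect no real obstacle. The only point needing care is that the argument uses nothing about the internal fill pattern of \(L_i\) produced by \(\Gamma_{N-1}\): whatever fill occurs, \(L_i\) is block lower triangular. So the conclusion rests solely on \(\Delta x_1^i\) being ordered last, which the construction of \(\Pi_i\) guarantees.
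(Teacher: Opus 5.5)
Your proof is correct and is essentially the paper's argument. The paper partitions \(L_i\) into a \(2\times 2\) block lower-triangular form with leading block \(\overline L_i\) and concludes \(\overline F_i=0\) at once from \(\overline L_i\overline F_i^\top=0\) and the nonsingularity of \(\overline L_i\); your block-column induction simply unrolls that same forward substitution one block at a time, and, as you note, it likewise uses only lower triangularity and the terminal position of \(\Delta x_1^i\), not the fill pattern induced by \(\Gamma_{N-1}\).
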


\begin{proof}
We write the matrix $L$ as
\begin{equation*}
L_i=
\begin{bmatrix}
\overline L_i & 0\\
G_i & L_{i,1}
\end{bmatrix},
\end{equation*}
where $L_{i,1}$ is the last diagonal block of $L_i$.
Writing \(F_i=[\overline F_i\;F_{i,1}]\) and transposing the coupling
solve gives \(L_iF_i^\top=\widehat C_i^\top\). By
\eqref{eq:permuted_tail_coupling}, this system is
\begin{equation*}
\begin{bmatrix}
\overline L_i & 0\\
G_i & L_{i,1}
\end{bmatrix}
\begin{bmatrix}
\overline F_i^\top\\
F_{i,1}^\top
\end{bmatrix}
=
\begin{bmatrix}
0\\
C_{i,1}^\top
\end{bmatrix}.
\end{equation*}
Since \(\overline L_i\) is nonsingular, the first block equation implies
\(\overline F_i=0\). Therefore, only the last block of \(F_i\) can be
nonzero.
\end{proof}

\end{document}